\newtheorem{theorem}{Theorem}
\newtheorem{assumption}{Assumption}
\newtheorem{lemma}{Lemma}
\def\bra#1{\langle #1|}
\def\ket#1{|#1\rangle}
\def\braket#1{\langle #1\rangle}
\newcommand{\mrm}[1]{\mathrm{#1}}
\newcommand{\rmS}{{\mathrm{S}}}
\newcommand{\rmB}{{\mathrm{B}}}
\newcommand{\rmI}{{\mathrm{I}}}
\newcommand{\prob}{\mathrm{Prob}}
\newcommand{\tr}{\mathrm{Tr}}
\newcommand{\abs}[1]{\left|#1\right|}
\newcommand{\onorm}[1]{\left\|#1\right\|}
\newcommand{\trdist}[2]{\mathcal{D}(#1,#2)}
\newcommand{\ldist}[2]{\mathcal{D}_{\rmS}(#1,#2)}
\newcommand{\hs}{H_\rmS}
\newcommand{\hi}{H_\rmI}
\newcommand{\hb}{H_\rmB}
\newcommand{\rmc}{\rho^{\mathrm{mc}}}
\newcommand{\rs}{\rho_\mathrm{S}}
\newcommand{\rb}{\rho_{\mathrm{B}}}
\newcommand{\rbcan}{\rho^{\mathrm{can}}_{\mathrm{B}}}
\newcommand{\rbmc}{\rho^{\mathrm{mc}}_{\mathrm{B}}}
\newcommand{\rde}{\omega}
\newcommand{\rcande}{\omega^{\mathrm{can}}}
\newcommand{\dmccan}{\delta_{\mrm{eq}}}
\newcommand{\nde}{D_{\mathrm{E}}}
\newcommand{\ndg}{D_{\mathrm{G}}}
\newcommand{\deff}{D_{\mathrm{eff}}}
\newcommand{\dbmc}{D^{\mathrm{mc}}_{\mathrm{B}}}
\newcommand{\ssy}{S_{\rmS}}
\newcommand{\epde}{\braket{\sigma_{\infty}}}
\newcommand{\epcande}{\braket{\sigma_{\infty}^{\mathrm{can}}}}
\newcommand{\dep}{\delta_{\mrm{EP}}}
\newcommand{\dei}{\delta E_{\mathrm{I}}}
\begin{document}


\title{Saturation of entropy production in quantum many-body systems}


\author{Kazuya Kaneko}
\affiliation{
	Department of Applied Physics, The University of Tokyo,
	7-3-1 Hongo, Bunkyo-ku, Tokyo 113-8656, Japan
}


\author{Eiki Iyoda}
\affiliation{
	Department of Applied Physics, The University of Tokyo,
	7-3-1 Hongo, Bunkyo-ku, Tokyo 113-8656, Japan
}
\author{Takahiro Sagawa}
\affiliation{
	Department of Applied Physics, The University of Tokyo,
	7-3-1 Hongo, Bunkyo-ku, Tokyo 113-8656, Japan
}

\date{\today}

\begin{abstract}
Bridging the second law of thermodynamics and microscopic reversible dynamics has been a longstanding problem in statistical physics.
Here, we address this problem on the basis of quantum many-body physics,
and discuss how the entropy production saturates in isolated quantum systems under unitary dynamics.
First, we rigorously prove that
the entropy production does indeed saturate in the long time regime,
even when the total system is in a pure state.
Second, we discuss the non-negativity of the entropy production at saturation,
implying the second law of thermodynamics.
This is based on the eigenstate thermalization hypothesis (ETH),
which states that even a single energy eigenstate is thermal.
We also numerically demonstrate that 
the entropy production saturates at a non-negative value
even when the initial state of a heat bath is a single energy eigenstate.
Our results reveal fundamental properties of the entropy production in isolated quantum systems at late times.
\end{abstract}

\pacs{05.30.d,03.65.w,05.70.a,05.70.Ln}

\maketitle


\section{\label{sec:Introduction}Introduction}
In the mid-nineteenth century, Rudolf Clausius introduced entropy as a measure of thermodynamic irreversibility,
by arguing that, while the energy of the universe is constant, the entropy of the universe tends to a maximum~\cite{Clausius1867}.
Ever since, understanding inevitable irreversibility in the macroscopic world is one of the most fundamental problems in statistical physics.
Especially, in light of the fact that isolated quantum systems have reversible dynamics,
it has not been clear whether such a behavior of the entropy production is consistent with quantum theory.

From the modern point of view,
thermalization of isolated quantum systems has been of much interest
in light of state-of-the-art technologies such as ultracold atoms
\cite{Kinoshita2006,Hofferberth2007,Trotzky2011,Gring2012,Langen2013,Kaufman2016},
trapped ions~\cite{Clos2016}, and superconducting qubits~\cite{Neill2016}.
For example,
it has been established that even a pure quantum state in an isolated system exhibits relaxation toward thermal equilibrium by  unitary dynamics
\cite{VonNeumann1929,Jensen1985,Tasaki1998,Rigol2008,Reimann2008,Linden2009,Goldstein2010,Reimann2010,Riera2012,Mueller2013,Goldstein2015,Reimann2015,Tasaki2015,Farrelly2016,Gogolin2016}. 
Several key theoretical ideas have been developed to understand such relaxation processes.
It has been rigorously proved that an overwhelming majority of pure states in the Hilbert space describe thermal equilibrium,
which is referred to as \textit{typicality}~\cite{Popescu2006,Goldstein2006,Sugita2006a,Reimann2007}.
The eigenstate thermalization hypothesis (ETH) 
is another cornerstone to understand thermalization~\cite{Jensen1985,Rigol2008,Berry1977,Peres1984,Deutsch1991,Srednicki1994,Biroli2010,Kim2014,Beugeling2014,Garrison2015},
which states that \textit{all} the energy eigenstates in an energy shell are thermal.
The ETH (or the \textit{strong} ETH) is believed to be true for non-integrable systems without disorder.
We note that there is a weaker version of the ETH (the \textit{weak} ETH)~\cite{Biroli2010}, which states that \textit{most} of the energy eigenstate are thermal.
The weak ETH is rigorously proved for both integrable and non-integrable systems without disorder~\cite{Mori2016,Iyoda2016}.

On the other hand, 
research into the second law,
such as the fluctuation theorem~\cite{Jarzynski1997,Crooks1999,Jarzynski1999,Kurchan2000,Tasaki2000,Esposito2009,Campisi2011,Sagawa2012,Alhambra2016,Morikuni2016}
and thermodynamic resource theories~\cite{Horodecki2013,Brandao2015},
is based on a special assumption that the initial state of a heat bath is the canonical ensemble.
This assumption is very strong,
given that a thermal equilibrium state can even be a pure quantum state.
If we do not assume the initial canonical distribution,
the non-negativity of the entropy production cannot be proved by
the conventional arguments,
i.e.,
the fluctuation theorem or the positivity of relative entropy~\cite{Sagawa2012}.
In this sense, 
except for only a few seminal works~\cite{Tasaki2000a,Goldstein2013,Ikeda2013},
the foundation of the second law in isolated quantum systems has not been fully addressed.

In this paper, we rigorously prove that the entropy production saturates at some value in the long time regime,
for a broad class of initial states including pure states.
We note that
\textit{the entropy production,}
the sum of the change in von Neumann entropy of a subsystem and the heat absorbed from a heat bath,
can change in time,
though the von Neumann entropy of the total system is invariant under unitary dynamics~\cite{nielsen2000quantum}.
In general,
the saturation value of the entropy production depends on both the initial state and the Hamiltonian, and could be negative.
We then prove that the saturation value is non-negative
if the system satisfies the strong ETH,
which leads to the second law of thermodynamics in the long time regime.
Our results provide
a quantum foundation to observation by Clausius.

\begin{figure}
	\includegraphics[width=0.9\linewidth]{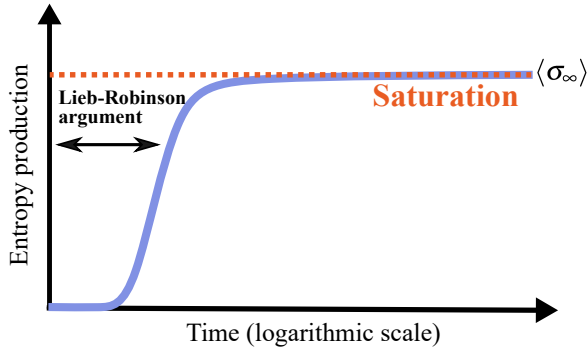}
	\caption{\label{fig:schematic}
		(color online)
		Schematic of a time evolution of the entropy production.
		Starting from a non-equilibrium initial state, the entropy production increases, and then saturates at  some value $ \epde $.
		The long time behavior is the main focus of the present work,
		while the short time behavior can be explained by our previous work~\cite{Iyoda2016} based on the Lieb-Robinson bound.}
\end{figure}

We note that in our previous work~\cite{Iyoda2016},
we have proved the second law for pure quantum states in the short time regime,
on the basis of the Lieb-Robinson bound~\cite{Lieb1972,Nachtergaele2006,Hastings2006}.
However, the previous result cannot apply to the long time regime.
In the present work,
we have adopted a completely different technique from the Lieb-Robinson bound,
and have proved the second law in the long time regime (see Fig.~\ref{fig:schematic}).

The rest of this paper is organized as follows.
In Sec.~\ref{sec:setup}, we formulate our setup and 
make a brief review on equilibration in isolated quantum systems.
In Sec.~\ref{sec:saturation}, we prove the main theorem, which states that the entropy production takes a constant value for most times.
In Sec.~\ref{sec:Non-negativity}, we derive a lower bound of the saturation value of the entropy production.
We also show that the saturation value is non-negative if the system satisfies the ETH.
In Sec.~\ref{sec:Numerical}, we confirm our theory by numerical simulation of hard-core bosons.
In Sec.~\ref{sec:Conclusion}, we summarize our results.
In Appendix~\ref{sec:effective_dimension}, we make a remark on the effective dimension.

\section{\label{sec:setup}Setup and notation}
We consider a quantum many-body system described by a finite-dimensional Hilbert space.
The total system is divided into two parts,
a small system $ \rmS $ and a large heat bath $ \rmB $.
Their Hilbert space dimensions are denoted by $ D_\rmS $ and $ D_\rmB $, respectively.
The total Hamiltonian is given by
\begin{align}
H=\hs+\hb+\hi,
\end{align}
where $ \hs $ is the Hamiltonian of  system $ \rmS $,
$ \hb $ is the Hamiltonian of  bath $ \rmB $,
and $ \hi $ is the interaction between system $ \rmS $ and bath $ \rmB $.
We write the spectrum decomposition of $ H $ as
\begin{align}
H= \sum_j E_jP_j,
\end{align}
where $ E_j $ is an eigen-energy ($ E_i\neq E_j $ for $ i\neq j $),
and $ P_j $ is the projector onto the eigenspace of $ H $ with energy eigenvalue $ E_j $.

\begin{figure}
	\includegraphics[width=0.9\linewidth]{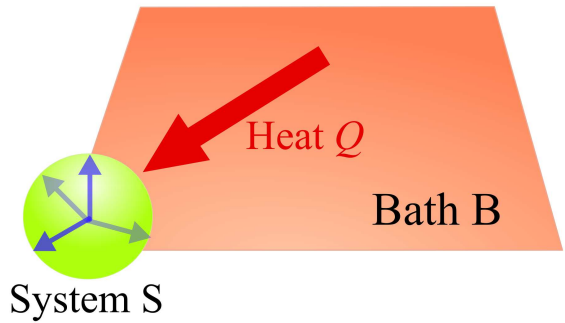}
	\caption{\label{fig:setup}
		(color online)
		Schematic of our setup.
		The total system consists of system $ \rmS $ and bath $ \rmB $.}
\end{figure}

We assume that 
the initial correlation between system $ \rmS $ and bath $ \rmB $ is zero.
The initial density operator of the total system is then given by
\begin{align}
\rho(0)=\rs(0)\otimes\rb(0),
\label{eq:initial_state}
\end{align}
where $ \rs(0) $ and $ \rb(0) $ are respectively
the initial density operators of system $ \rmS $ and bath $ \rmB $.
We do not assume that $ \rb(0) $ is the canonical ensemble.

The total system obeys the unitary time evolution generated by the total Hamiltonian $ H $.
The state at time $ t $ is written by $ \rho(t)=e^{-iH t}\rho(0)e^{iH t} $.
Note that we set $ \hbar=1 $.
We also write the time-averaged state by
\begin{align}
\rde:=\lim_{\tau\to\infty}\frac{1}{\tau}\int_0^{\tau}\rho(t)dt,
\label{eq:def_DE}
\end{align}
which is called the diagonal ensemble.
Because of the recurrence property of the unitary evolution, 
an isolated quantum system cannot converge to an exact stationary state.
However, a system can relax to an effective stationary state most of the time between recurrences.
In fact, it has been proved in Refs.~\cite{Linden2009,Short2012} that for any operator $ O $,
and for any $ \epsilon>0 $,
\begin{align}
\lim_{\tau\to\infty}\prob_{t\in[0,\tau]}\left[\abs{\tr[O\rho(t)]-\tr[O\rde]}\geq\epsilon\onorm{O}\right]
\leq\frac{\ndg}{\epsilon^2\deff}.
\label{eq:equilibration1}
\end{align}
Here, $ \prob_{t\in[0,\tau]} $ is the uniform measure over $ t\in [0,\tau]  $, and
$ \onorm{O} $ is the operator norm of $ O $.
$ \ndg $ is the degeneracy of the most degenerate energy gap,
which is given by $ \ndg :=\max_{j\neq k}\abs{\{(l,m):G_{(l,m)}=G_{(j,k)}\}} $
with $ G_{(j,k)}:=E_j-E_k $.
If the total Hamiltonian $ H $ has  no degenerate energy gap, $ \ndg=1 $.
$ \deff $ is the effective dimension of the initial state $ \rho(0) $,
which is defined as $ \deff:=1/\sum_j(\tr[P_j\rho(0)])^2 $.
If the Hamiltonian has no degeneracy,
the effective dimension is equal to the inverse of the purity of the diagonal ensemble, i.e., $ 1/\tr[\rde^2] $.
Inequality~\eqref{eq:equilibration1} means that observables equilibrate to their diagonal ensemble averages
if $ \deff $ is sufficiently large.

To quantify the distance between two density operators,
we use the trace distance $ \trdist{\rho}{\tau}:=\frac{1}{2}\tr[\sqrt{(\rho-\tau)^2}] $.
When it is small, we hardly distinguish two states.
We also use the local distance $ \ldist{\rho}{\tau}:=\trdist{\rs}{\tau_{\rmS}}$,
where we write $ \rs:=\tr_\rmB[\rho]$ for the reduced density operator.
We regard that system $ \rmS $ equilibrates to the diagonal ensemble 
when  the local distance $ \ldist{\rho(t)}{\omega}$ is small most of the time.
It has been shown  in Refs.~\cite{Linden2009,Short2012}  that for any positive number $ \epsilon>0 $,
\begin{align}
\lim_{\tau\to\infty}\prob_{t\in[0,\tau]}\left[\ldist{\rho(t)}{\omega}\geq\epsilon\right]
\leq\frac{1}{2}\sqrt{\frac{D_{\rmS}^2\ndg}{\epsilon^2\deff}}.
\label{eq:equilibration2}
\end{align}
This inequality implies that system $ \rmS $ equilibrates to the diagonal ensemble
if $ \deff $ is sufficiently large compared to the Hilbert space dimension of system $ \rmS $.

The effective dimension $ \deff $ of the initial product state $ \rho(0) $ is bounded from below
by the purity of the initial state of bath $ \rmB $:
\begin{align}
\deff \geq \frac{1}{\nde \tr[\rb(0)^2]},
\label{eq:bound_eff}
\end{align}
where $ \nde $ is the maximum energy degeneracy of the total Hamiltonian $ H $.
The proof of inequality \eqref{eq:bound_eff} is the following.
We denote by $ K_j $ an orthonormal basis of the degenerate eigenspace of energy  $ E_j $.
Then, we have
\begin{align}
\frac{1}{\deff}
&=\sum_j(\tr[P_j\rs(0)\otimes\rb(0)])^2
\\
&=
\sum_j\left(\sum_{\ket{E_k}\in K_j}\braket{E_k|\rs(0)\otimes\rb(0)|E_k}\right)^2
\\
&\leq
\sum_j|K_j|\sum_{\ket{E_k}\in K_j}\braket{E_k|\rs(0)\otimes\rb(0)|E_k}^2
\\
&\leq \nde\sum_j
\sum_{\ket{E_k}\in K_j}\braket{E_k|\rs(0)\otimes\rb(0)|E_k}^2
\\
&\leq
\nde\tr[(\rs(0)\otimes\rb(0))^2]
\\
&\leq
\nde\tr[\rb(0)^2],
\end{align}
where we used $ \tr[\rs(0)^2]\leq 1 $ to obtain the last line.

When the initial state of bath $ \rmB $ is the microcanonical ensemble $ \rbmc:=P_{\rmB}^{\mrm{mc}}/\tr[P_{\rmB}^{\mrm{mc}}]  $,
where $ P_{\rmB}^{\mrm{mc}} $ is the projector onto the Hilbert space of the energy shell  $M_{U,\Delta}= \{E^{\rmB}_j:\abs{E^{\rmB}_j-U}\leq \Delta\} $
with $ \Delta >0 $ (see Sec.~\ref{subsec:ETH} for more details),
or the canonical ensemble $ \rbcan:=e^{-\beta\hb}/\tr[e^{-\beta\hb}] $,
then the purity $ \tr[\rb(0)^2] $ is exponentially small in the size of bath $ \rmB $
(see Appendix~\ref{sec:effective_dimension}).
Thus $ \deff $  is exponentially large in the size of bath $ \rmB $.
Furthermore, when the initial state of bath $ \rmB $ is randomly taken from the Hilbert space of the energy shell according to the Haar measure,
$ \deff $ typically becomes exponentially large in the size of bath $ \rmB $
(see again Appendix~\ref{sec:effective_dimension}).
Therefore, system $ \rmS $ equilibrates to the diagonal ensemble
if the initial state of bath $ \rmB $ is the microcanonical ensemble, the canonical ensemble, or a typical pure state,
given that bath $ \rmB$ is sufficiently large.

Finally, we define the entropy production from time $ 0 $ to $ t $ by
\begin{align}
\braket{\sigma(t)}=\Delta \ssy(t)-\beta Q(t),
\end{align}
where $ \Delta \ssy(t):=\ssy(t)-\ssy(0)$ is the change in the von Neumann entropy of system $ \rmS $ with $ \ssy(t):=S(\rs(t))=-\tr_{\rmS}[\rs(t)\log\rs(t)] $,
and $ Q(t):=-\tr_{\rmB}[\hb(\rb(t)-\rb(0)]  $ is the heat absorbed by system $ \rmS $.
If the initial state of bath $ \rmB $ is the canonical ensemble $ \rb(0)=\rbcan $,
the entropy production $ \braket{\sigma(t)} $ is always non-negative: $ \braket{\sigma(t)}\geq 0 $~\cite{Sagawa2012},
which is an expression of the second law of thermodynamics.
This inequality results from the non-negativity of the quantum relative entropy.
When the initial state of bath $ \rmB $ is not the canonical ensemble,
the entropy production $ \braket{\sigma(t)} $ is not necessarily non-negative.
In this paper, we discuss the behavior of the entropy production and the validity of the second law for general initial states.

\section{\label{sec:saturation}Saturation of the entropy production}
In this section, 
we discuss our main topic:
saturation of the entropy production under reversible unitary dynamics.
In Sec.~\ref{sec:setup}, 
we discussed that system $ \rmS $ equilibrates to the diagonal ensemble if the effective dimension is sufficiently large,
i.e., inequalities~\eqref{eq:equilibration1}~and~\eqref{eq:equilibration2}.
However, the entropy production cannot be described as the expectation value of any instantaneous observable.
Therefore, we cannot directly apply the argument in Sec.~\ref{sec:setup} to the entropy production.
Instead, we evaluate the difference between the entropy production at time $ t $ 
and the saturated entropy production $ \epde $ by using inequalities~\eqref{eq:equilibration1}~and~\eqref{eq:equilibration2}.

We now formulate our theorem rigorously.
We assume that the initial state of the total system is given by a product state \eqref{eq:initial_state}.
The saturated entropy production $ \epde $ is defined as
\begin{align}
\epde :=S(\rde_{\rmS})-S(\rs(0))-\beta \tr[\hb(\rho(0)-\rde)],
\label{eq:saturation_value}
\end{align}
where $ \rde $ is defined in Eq.~\eqref{eq:def_DE},
and $ \rde_{\rmS}:=\tr_{\mrm{B}}[\rde] $.
If the difference $ \abs{\braket{\sigma(t)}-\epde} $ is small most of the time,
the entropy production saturates at $ \epde $.
In fact, we show the following theorem.

\begin{theorem}[\label{thm:saturation}Saturation of the entropy production]
For any $ 0<\epsilon<1/2 $,
\begin{align}
&\lim_{\tau\to\infty}\prob_{t\in[0,\tau]}\Bigl[
\abs{\braket{\sigma(t)}-\epde}
\nonumber\\
&\qquad\leq
\epsilon\log (D_{\rmS}-1)+H_2(\epsilon)
+\epsilon\beta\onorm{\hs+\hi}
\Bigr]
\nonumber\\
&\span \geq
1-\frac{1}{2}\sqrt{\frac{D_{\rmS}^2\ndg}{\epsilon^2\deff}}-\frac{\ndg}{{\epsilon}^2\deff}
\label{eq:saturation},
\end{align}
where $ H_2(x):=-x\log x-(1-x)\log(1-x) $.
\end{theorem}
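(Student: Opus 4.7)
The plan is to write $\braket{\sigma(t)}-\epde$ as the sum of an entropy contribution on subsystem $\rmS$ and a bath-energy contribution, and then to control each piece by invoking one of the equilibration inequalities already recalled in Sec.~\ref{sec:setup}. Unfolding the two definitions and noting that the initial-time pieces $S(\rs(0))$ and $-\beta\tr[\hb\rho(0)]$ appear with identical coefficients on both sides, one obtains
\begin{align}
\braket{\sigma(t)}-\epde
= \bigl[S(\rs(t))-S(\rde_{\rmS})\bigr]
+\beta\tr[\hb(\rho(t)-\rde)],
\end{align}
so by the triangle inequality it suffices to bound the two brackets separately on a set of times of large measure.

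The bath-energy term is the delicate one: a direct estimate via the operator norm of $\hb$ would scale extensively with the bath size, whereas the target bound in the theorem involves only $\onorm{\hs+\hi}$. The key observation is conservation of total energy: since $\tr[H\rho(t)]=\tr[H\rho(0)]=\tr[H\rde]$, one has
\begin{align}
\tr[\hb(\rho(t)-\rde)] = -\tr[(\hs+\hi)(\rho(t)-\rde)],
\end{align}
which trades the extensive observable $\hb$ for the microscopic observable $\hs+\hi$. I would then apply inequality~\eqref{eq:equilibration1} with $O=\hs+\hi$, obtaining $\abs{\tr[(\hs+\hi)(\rho(t)-\rde)]}\le\epsilon\onorm{\hs+\hi}$ except on a fraction of times at most $\ndg/(\epsilon^2\deff)$.

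The entropy piece is handled by the Fannes-Audenaert continuity estimate,
\begin{align}
\abs{S(\rs(t))-S(\rde_{\rmS})}\le T\log(D_{\rmS}-1)+H_2(T),
\end{align}
with $T=\trdist{\rs(t)}{\rde_{\rmS}}$. Its right-hand side is monotonically nondecreasing in $T$ on $[0,1/2]$, which is exactly why the theorem imposes $\epsilon<1/2$. By inequality~\eqref{eq:equilibration2} one has $T\le\epsilon$ except on a fraction of times at most $\frac{1}{2}\sqrt{D_{\rmS}^2\ndg/(\epsilon^2\deff)}$, so on the complementary "good" set this bracket is at most $\epsilon\log(D_{\rmS}-1)+H_2(\epsilon)$. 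A union bound over the two failure events then reproduces precisely the probability appearing in~\eqref{eq:saturation}.

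The only conceptually nontrivial step is the energy-conservation identity: without it, the only handle on the bath-energy fluctuation would be $\onorm{\hb}$, which is useless in the thermodynamic limit. Everything else—the Fannes-Audenaert estimate and the two equilibration inequalities~\eqref{eq:equilibration1}-\eqref{eq:equilibration2}—is a black-box application.
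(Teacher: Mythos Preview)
Your proof is correct and follows essentially the same approach as the paper: the same decomposition into an entropy piece and a bath-energy piece, the same energy-conservation trick to replace $\hb$ by $-(\hs+\hi)$, the Fannes--Audenaert inequality together with~\eqref{eq:equilibration2} for the entropy, inequality~\eqref{eq:equilibration1} for the heat, and a union bound to combine them. Your explicit remark on the monotonicity of $T\mapsto T\log(D_{\rmS}-1)+H_2(T)$ on $[0,1/2]$, which is what makes the restriction $\epsilon<1/2$ necessary, is a clarification the paper leaves implicit.
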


\begin{proof}
From the definitions of $ \braket{\sigma(t)} $ and $ \epde $,
we have
\begin{align}
&\abs{\braket{\sigma(t)}-\epde}
\nonumber\\
& \leq \abs{S(\rs(t))-S(\rde_{\rmS})}
+\beta\abs{\tr\left(\hb\rho(t)-\rde\right)}.
\end{align}
By applying the Fannes-Audenaert inequality~\cite{Audenaert2006}, the first term on the right-hand side is bounded by
\begin{align}
&\abs{S(\rs(t))-S(\rde_{\rmS})}
\nonumber\\
& \leq \mathcal{D}_{\rmS}(\rho(t),\rde)\log (D_{\rmS}-1)+H_2(\mathcal{D}_{\rmS}(\rho(t),\rde)).
\end{align}
Using inequality~\eqref{eq:equilibration2},
we evaluate $ \mathcal{D}_{\rmS}(\rho(t),\rde) $,
and then we have for any $ 0<\epsilon<1/2 $,
\begin{widetext}
\begin{align}
\lim_{\tau\to\infty}\prob_{t\in[0,\tau]}\left[
\abs{S(\rs(t))-S(\rde_{\rmS})}
\leq \epsilon\log (D_{\rmS}-1)+H_2(\epsilon)
\right]
\geq
1-\frac{1}{2}\sqrt{\frac{D_{\rmS}^2\ndg}{\epsilon^2\deff}}.
\label{eq:entropy}
\end{align}
\end{widetext}
Since the expectation value of the total Hamiltonian $H$ is the same for both the state $ \rho(t) $ and the diagonal ensemble $ \rde $,
we have
\begin{align}
\tr\left(\hb(\rho(t)-\rde)\right)=-\tr\left[(\hs+\hi)(\rho(t)-\rde)\right].
\end{align}
By using inequality~\eqref{eq:equilibration1},
we have for any $ 0<\epsilon<1/2 $,
\begin{align}
&\lim_{\tau\to\infty}\prob_{t\in[0,\tau]}\left[
\beta\abs{\tr\left(\hb(\rho(t)-\rde)\right)}
\leq \epsilon\beta\onorm{\hs+\hi}
\right]
\nonumber\\
&\qquad
\geq
1-\frac{\ndg}{{\epsilon}^2\deff}.
\label{eq:heat}
\end{align}
By combining inequality \eqref{eq:entropy} with \eqref{eq:heat},
we finally obtain inequality \eqref{eq:saturation}.
\end{proof}

We have made no assumption on the initial state of bath $ \mrm{B} $;
Theorem \ref{thm:saturation} is applicable not only to the canonical ensemble, but also to an arbitrary initial state.
From Theorem~\ref{thm:saturation}
we find that a sufficient condition of  the saturation is just the large effective dimension of the initial state.
As we remarked in Sec.~\ref{sec:setup}, the effective dimension is exponentially large with the size of bath $ \mrm{B} $,
when the initial state of bath $ \mrm{B} $ is the microcanonical ensemble, 
the canonical ensemble,
or a typical quantum state.
From Theorem~\ref{thm:saturation},
the entropy production saturates at $ \epde $ for these cases.

We remark that
the time scale,
which makes the long time limit in Theorem~\ref{thm:saturation} meaningful,
is expected to be doubly exponentially large with respect to the size of the total system
for non-integrable systems.
This is much longer than the time scale determined by the Lieb-Robinson bound.
However, 
it has been observed numerically and analytically that
the informational quantities, such as the entanglement entropy,
ballistically spread and saturate at around the Lieb-Robinson time
for both of integrable and  non-integrable systems ~\cite{calabrese2005,DeChiara2006,Lauchli2008,Kim2013}.
As a result,
the entropy production,
defined with the von Neumann entropy,
is expected to relax to the long-time average at around the Lieb-Robinson time,
as numerically observed in Ref.~\cite{Iyoda2016}
and also in Sec.~\ref{sec:Numerical}.
Therefore, we expect that in practical situations,
there is no intermediate time regime between the saturation of the entropy production and the Lieb-Robinson time.
However,
for the moment,
it seems to be formidably difficult to prove it rigorously for general situations.

As mentioned before, it has been shown that the entropy production is always non-negative
when the initial state of bath $ \mrm{B} $ is the canonical ensemble~\cite{Sagawa2012}.
In this case, the entropy production increases and saturates at 
a non-negative value,
which is consistent with the second law of thermodynamics.

On the other hand,
the saturation value $ \epde $ is not necessarily non-negative
for a general initial state. 
In fact, $ \epde $ is  determined by the diagonal ensemble $ \rde $,
which in general depends on the details of the initial state.
For this reason, it is difficult to prove the  non-negativity of $ \epde $ for completely general situations.

If the total system satisfies the strong ETH,
$ \rde_{\rmS} $ does not depend 
on the details of the initial state,
but only depends on the initial average energy.
We note that the strong ETH is believed to be true for non-integrable systems~\cite{Rigol2008,Biroli2010,Kim2014,Beugeling2014,Garrison2015}
(see also~\cite{Gogolin2016} for definitions of integrability in quantum systems).
In such a case,
the diagonal ensemble $ \rde_{\rmS} $  of a general initial state
is the same as that of the initial canonical ensemble,
and therefore $ \epde $ takes the same value as the canonical case,
where $ \epde $ is non-negative.
This is the scenario that we will prove in the next section by assuming the strong ETH.

\section{\label{sec:Non-negativity}Non-negativity of the saturated entropy production}

We now discuss the non-negativity of the saturated entropy $ \epde $.
Specifically, we utilize the ETH~\cite{Jensen1985,Rigol2008,Berry1977,Peres1984,Deutsch1991,Srednicki1994,Biroli2010,Kim2014,Beugeling2014,Garrison2015},
which is sufficient to show the non-negativity of  $ \epde $.

In Sec.~\ref{subsec:bound}, we prove a general lower bound of $ \epde $
without using the ETH.
Then, in Sec.~\ref{subsec:ETH}, we show that the lower bound approaches to zero in the large bath limit,
if the system satisfies the strong ETH.
In Sec.~\ref{subsec:weak ETH},
we discuss consequences of the weak ETH.

\subsection{\label{subsec:bound}Lower bound of the saturation value}
First, we derive a lower bound of the saturated entropy production
by comparing the saturation value for a general initial state of the bath, $ \rb(0) $, 
with that for the canonical initial state, $ \rbcan $.
For this purpose, we choose the canonical ensemble $ \rbcan $ such that $ \tr_{\mrm{B}}[\hb\rbcan]= \tr_{\mrm{B}}[\hb\rb(0)]$ for a given $ \rb(0) $.
We denote the diagonal ensemble of $ \rs(0)\otimes\rbcan $ by $ \rcande $,
and the local trace distance between $ \rde $ and $ \rcande $ by $ \dmccan:=\mathcal{D}_{\mrm{S}}(\rde,\rcande) $.

Let us first remark on a special case
where the deviation vanishes as $ \dmccan\to0 $ in the thermodynamic limit,
and
the interaction energy is negligible.
In such a case,
the system equilibrates to the same state
as the case that the bath is initially in the canonical ensemble $ \rbcan $.
In this case, the saturation value of the entropy production for a general initial state takes the same value as the canonical initial state,
and therefore it is non-negative.
In Sec.~\ref{subsec:ETH},
we will show that the deviation indeed vanishes, $ \dmccan\to0 $,
when the strong ETH holds for the total system.

For general quantum systems,
the deviation does not necessarily vanish even in the thermodynamic limit.
By keeping $ \dmccan $ finite,
and also by taking the interaction energy into consideration,
we have the following lower bound of $ \epde $.

\begin{theorem}[\label{thm:lower_bound}A lower bound of the saturation value]
The saturated entropy production $ \epde $ satisfies
\begin{align}
\epde\geq -\dep.
\label{eq:lower_bound}
\end{align}
Here, we defined the error term $ \dep $ as 
\begin{align}
\delta_{\mrm{EP}}:=\dmccan\log (D_{\rmS}-1)+H_2(\dmccan)
+2\dmccan\beta\onorm{\hs}
+\beta\dei,
\label{eq:def_dep}
\end{align}
where $ \dei $ is the change in the interaction energy defined as
\begin{align}
\dei:&=\left|\tr\left[\hi(\rs(0)\otimes\rb(0)-\rde)\right]\right.
\nonumber\\
&\span
\left.-\tr\left[\hi(\rs(0)\otimes\rbcan-\rcande)\right]\right|.
\label{eq:def_dei}
\end{align}
\end{theorem}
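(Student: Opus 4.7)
The plan is to derive the bound by comparing the actual process with a ``canonical replica'' of itself. Given $\rb(0)$, introduce the canonical ensemble $\rbcan$ chosen so that $\tr_{\rmB}[\hb\rbcan] = \tr_{\rmB}[\hb\rb(0)]$ and consider the parallel unitary evolution starting from $\rs(0)\otimes\rbcan$. For this replica, the entropy production is non-negative at every time by the relative-entropy argument of Ref.~\cite{Sagawa2012}; applying Theorem~\ref{thm:saturation} to it yields a saturation value $\epcande \geq 0$. The proof then reduces to showing $|\epde - \epcande| \leq \dep$, since then $\epde = \epcande + (\epde - \epcande) \geq -\dep$.

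I would split $\epde - \epcande$ into an entropy piece $S(\rde_{\rmS}) - S(\rcande_{\rmS})$ and a heat piece $-\beta\{\tr[\hb(\rho(0)-\rde)] - \tr[\hb(\rs(0)\otimes\rbcan - \rcande)]\}$. The entropy piece is immediately bounded by $\dmccan\log(D_{\rmS}-1) + H_2(\dmccan)$ via Fannes--Audenaert applied to the reduced diagonal ensembles, whose local trace distance is $\dmccan$ by hypothesis. The heat piece is the subtle one, because $\hb$ is not a system observable and its expectation cannot be controlled directly by $\dmccan$. The device is total energy conservation, $\tr[H\rde] = \tr[H\rho(0)]$ and $\tr[H\rcande] = \tr[H(\rs(0)\otimes\rbcan)]$, combined with the matching condition $\tr_{\rmB}[\hb\rb(0)] = \tr_{\rmB}[\hb\rbcan]$ and the decomposition $\hb = H - \hs - \hi$. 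Substituting and collecting terms, the heat piece rewrites as a system-only contribution $\tr[\hs(\rde_{\rmS} - \rcande_{\rmS})]$, bounded by $2\dmccan\onorm{\hs}$, plus an interaction contribution that regroups exactly into $\tr[\hi(\rs(0)\otimes\rb(0)-\rde)] - \tr[\hi(\rs(0)\otimes\rbcan - \rcande)]$, whose absolute value is $\dei$ by definition~\eqref{eq:def_dei}. After multiplication by $\beta$ and assembly with the entropy bound, this reproduces $\dep$ as given by \eqref{eq:def_dep}.

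The main obstacle is precisely this algebraic bookkeeping of the heat term: one has to check that, after using total energy conservation and the split $\hb = H - \hs - \hi$, the purely bath pieces cancel and what survives is exactly a system observable together with the quantity $\dei$ already named in the statement. If some residual bath operator failed to cancel, it could not be bounded by the available ingredients. Once that identity is verified, the rest is routine, and combining with $\epcande \geq 0$ yields $\epde \geq -\dep$.
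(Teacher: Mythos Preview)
Your proposal is correct and follows essentially the same route as the paper: compare with the canonical replica, split $\epde-\epcande$ into an entropy piece bounded by Fannes--Audenaert and a heat piece converted via total energy conservation into a system term ($\leq 2\dmccan\onorm{\hs}$) plus the interaction term $\dei$, then invoke $\epcande\geq 0$. One small remark: you do not need Theorem~\ref{thm:saturation} to obtain $\epcande\geq 0$; the relative-entropy argument of Ref.~\cite{Sagawa2012} applies directly to the diagonal ensemble (dephasing only increases the total von Neumann entropy), giving $\epcande\geq 0$ exactly for the finite system, as the paper uses without further comment.
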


\begin{proof}
The entropy production with the initial canonical state $ \rs(0)\otimes\rbcan $, corresponding to the diagonal ensemble $ \rcande $,
is given by
\begin{align}
\epcande&:=S(\rcande_{\mrm{S}})-S(\rs(0))-\beta\tr\left[\hb(\rbcan-\rcande)\right].
\end{align}
The deviation between $ \epde $ and $ \epcande $   is
\begin{align}
&\epde-\epcande
\nonumber\\
&=
S(\rde_{\mrm{S}})-S(\rcande_{\mrm{S}})
\nonumber\\
&\qquad
+\beta\left(\tr\left[\hb(\rde_{\mrm{B}}-\rb(0))\right]-\tr\left[\hb(\rcande_{\mrm{B}}-\rbcan)\right]\right).
\label{eq:dev_EP}
\end{align}
By applying the Fannes-Audenaert inequality,
the difference of the von Neumann entropies on the right-hand side of Eq.~\eqref{eq:dev_EP} is evaluated as
\begin{align}
\abs{S(\rde_{\mrm{S}})-S(\rcande_{\mrm{S}})}\leq \dmccan\log(D_{\rmS}-1)+H_2(\dmccan).
\label{eq:dev_NE_bound}
\end{align}
The total average energy is the same for both the initial state and the diagonal ensemble.
Therefore, we have
\begin{align}
&\tr\left[\hb(\rde-\rs(0)\otimes\rb(0))\right]
\nonumber\\
&\qquad=-\tr\left[(\hs+\hi)(\rde-\rs(0)\otimes\rb(0))\right],
\label{eq:energy_consv_mc}
\\
&\tr\left[\hb(\rcande-\rs(0)\otimes\rbcan)\right]
\nonumber\\
&\qquad=-\tr\left[(\hs+\hi)(\rcande-\rs(0)\otimes\rbcan)\right].
\label{eq:energy_consv_can}
\end{align}
From  inequalities~\eqref{eq:energy_consv_mc} and \eqref{eq:energy_consv_can},
we can evaluate $ \tr\left[\hb(\rde-\rcande)\right] $ as
\begin{align}
\abs{\tr\left[\hb(\rde-\rcande)\right]}
\leq
2\onorm{\hs}\dmccan+\dei.
\label{eq:dev_HB_bound}
\end{align}
From inequalities~\eqref{eq:dev_HB_bound} and \eqref{eq:dev_NE_bound},
we have
\begin{align}
&\abs{\epde-\epcande}\nonumber\\
&\leq
\dmccan\log(D_{\rmS}-1)+H_2(\dmccan)+2\beta\onorm{\hs}\dmccan+\beta\dei.
\label{eq:dev_EP_bound}
\end{align}
By combining inequality \eqref{eq:dev_EP_bound} and $ \epcande\geq0 $,
we prove the theorem.
\end{proof}

The first three terms on the right-hand side of Eq.~\eqref{eq:def_dep} vanishes in the limit of $ \dmccan\to 0 $.
The last term on the right-hand side of Eq.~\eqref{eq:def_dep} vanishes (i.e., $ \dei=0 $) if 
\begin{align}
[\hs+\hb,\hi]=0 ,
\label{eq:hs+hb_consv}
\end{align}
which means that the sum of the energies of system $ \mrm{S} $ and bath $ \mrm{B} $ is a conserved quantity,
and the interaction energy does not change during the dynamics. 
The same condition as Eq.~\eqref{eq:hs+hb_consv} is also assumed in the thermodynamic resource theory~\cite{Horodecki2013,Brandao2015}.

We note that Eq.~\eqref{eq:hs+hb_consv} does not necessary imply that
the interaction itself is weak.
For example, the Jaynes-Cummings model with the resonance condition satisfies Eq.~\eqref{eq:hs+hb_consv}
even when the interaction is strong~\cite{Jaynes1963}.
Although a general quantum system does not necessarily satisfy the condition \eqref{eq:hs+hb_consv},
we have numerically shown that the contribution from
$\dei$ is negligibly small in some non-integrable models (see Sec.~\ref{sec:Numerical} and Ref.~\cite{Iyoda2016}).
In general, however,
it is still unclear under what condition the interaction term $ \dei $
is negligible without weak coupling limit,
which is a future issue.

\subsection{\label{subsec:ETH}Non-negativity from the strong ETH}

We next discuss the initial state independence (i.e., $ \dmccan\to0 $)
on the basis of the strong ETH,
which states that \textit{every} energy eigenstate represents  thermal equilibrium.
There has not yet been a mathematical proof of the strong ETH,
while it has been confirmed by a lot of numerical experiments~\cite{Rigol2008,Biroli2010,Kim2014,Beugeling2014,Garrison2015}.
In this section, we just assume the strong ETH as a starting point of our argument.

To state the strong ETH rigorously,
we focus on the situation that the total system has
the well-defined thermodynamic limit,
when only bath $ \rmB $ becomes large and
system $ \rmS $ is kept small.
Let $ N $ be the total system size (i.e., the number of lattice sites or particles).
For simplicity, we also assume that Hamiltonian $ H $ has no degeneracy (i.e., $ \nde=1 $).

The energy shell with energy density $ u $ and width $ \Delta $ is defined as
the following set of indexes of energy eigenvalues of the total Hamiltonian $ H $:
\begin{align}
M_{uN,\Delta}:=\{j:\abs{E_j-uN}\leq\Delta\}.
\label{eq:energy_shell}
\end{align}
Here, 
$ \Delta >0 $ can be arbitrary taken within the order of $ \Delta=\mathcal{O}(N^\alpha) $ with $ 0\leq\alpha<1 $.
The microcanonical ensemble of the energy shell $ M_{uN,\Delta} $ is denoted by $ \rmc $.
We note that in rigorous statistical mechanics~\cite{Ruelle},
the energy shell is defined as
$ M'_{U',\Delta'}:=\{j:U'-\Delta'\leq E_j\leq U'\} $,
where $ \lim_{N\to\infty}U'/N $ exists and is finite,
and $ \Delta'=\mathcal{O}(N^{\alpha'}) $ with  $ 0\leq\alpha'\leq1 $.
With this definition, many of the fundamental properties in standard statistical mechanics, such as Boltzmann's formula and the equivalence of ensembles,
have been rigorously proved~\cite{Ruelle,Tasaki2016}.
Our definition of the energy shell \eqref{eq:energy_shell} is regarded as a special case of the above standard definition,
by taking $ U'=uN+\Delta $ and $ \Delta'=2\Delta $ 
(i.e.,
$ M_{uN,\Delta}= M'_{uN+\Delta,2\Delta }$).

In the present context, the strong ETH of the total system is defined as follows.

\begin{assumption}[\label{dfn:Strong-ETH}Strong ETH]
We say that the total system satisfies the strong ETH,
if for any $ \epsilon>0 $,
\begin{align}
\ldist{\ket{E_j}\bra{E_j}}{\rmc}<\epsilon
\quad\text{for any}\  j\in M_{uN,\Delta}
\label{eq:strong ETH}
\end{align}
holds for sufficiently large $ N $.
\end{assumption}

From the non-degeneracy of the energy eigenvalues, 
the diagonal ensemble can be represented as
\begin{align}
	\rde=\sum_j\ket{E_j}\braket{E_j|\rs(0)\otimes\rb(0)|E_j}\bra{E_j}
	\label{eq:DE_non-deg}.
\end{align}
As shown below,
by using the strong ETH,
we can show the initial state independence of $ \dmccan\to0  $ in the thermodynamic limit.
To apply the strong ETH, 
we need to make an additional assumption on the initial energy distribution.

\begin{assumption}[\label{asmp:energy_distribution}Energy distribution of the initial state]
Let $ P_{\mrm{out}} $ be the projection operator onto the subspace corresponding to the outside of the energy shell:
$  P_{\mrm{out}}:=\sum_{j\not\in M_{uN,\Delta}} \ket{E_j}\bra{E_j}$.
We assume that for any $ \epsilon>0 $, 
\begin{align}
\tr[P_{\mrm{out}}\rs(0)\otimes\rb(0)]&<\epsilon
\label{eq:assumption1}
\end{align}
holds for sufficiently large $N $.
\end{assumption}

This assumption implies that the energy distribution of the initial state is narrow
and centered around $ uN $.
This is a reasonable
assumption if we take the width of the energy shell 
within the order of $ \Delta =\mathcal{O}(N^\alpha) $ with $ 1/2<\alpha<1 $.
In fact,
for the case of the canonical ensemble,
the standard deviation of the energy distribution
is the order of $ \mathcal{O}(N^{1/2}) $~\cite{Tasaki2016}.

If the energies of the system and the interaction are small enough than the energy of  the bath,
the energy distribution of the total initial state is similar to that of the bath.
We can thus expect that Assumption \ref{asmp:energy_distribution} would hold
when the bath is initially in the microcanonical ensemble, the canonical ensemble, or a pure state in the energy shell.

We now evaluate the local trace distance between $ \rde $ and $ \rmc $:

\begin{lemma}[\label{lem:thermalization}Initial state independence from the strong ETH]
Under Assumptions \ref{dfn:Strong-ETH} and \ref{asmp:energy_distribution},
for any $ \epsilon>0 $,
\begin{align}
\mathcal{D}_{\mrm{S}}(\rde,\rmc)<\epsilon
\label{eq:theramaliztion}
\end{align}
holds for sufficiently large $ N $.
\end{lemma}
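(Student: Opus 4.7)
The plan is to use the representation of the diagonal ensemble given in Eq.~\eqref{eq:DE_non-deg}, which by the non-degeneracy assumption expresses $\rde$ as a convex combination $\rde=\sum_j p_j\ket{E_j}\bra{E_j}$ with weights $p_j:=\braket{E_j|\rs(0)\otimes\rb(0)|E_j}$. After tracing out bath $\mrm{B}$, we obtain
\begin{align}
\rde_{\mrm{S}}-\rsmc
&=\sum_{j\in M_{uN,\Delta}} p_j\bigl(\tr_{\mrm{B}}[\ket{E_j}\bra{E_j}]-\rsmc\bigr)\nonumber\\
&\quad +\sum_{j\notin M_{uN,\Delta}} p_j\,\tr_{\mrm{B}}[\ket{E_j}\bra{E_j}]
-\biggl(\sum_{j\notin M_{uN,\Delta}}p_j\biggr)\rsmc,
\end{align}
obtained by adding and subtracting $\rsmc$ and using that the $p_j$ sum to $1$. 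This splits the deviation into an \emph{inside-the-shell} piece controlled by the strong ETH and two \emph{outside-the-shell} pieces controlled by Assumption~\ref{asmp:energy_distribution}.

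Next I would apply the triangle inequality to $\tnorm{\cdot}$ term by term. For the first sum, Assumption~\ref{dfn:Strong-ETH} gives $\tnorm{\tr_{\mrm{B}}[\ket{E_j}\bra{E_j}]-\rsmc}\leq 2\epsilon_1$ for every $j\in M_{uN,\Delta}$ and sufficiently large $N$, so that sum is bounded by $2\epsilon_1\sum_{j\in M_{uN,\Delta}}p_j\leq 2\epsilon_1$. For the remaining two pieces, each reduced density operator $\tr_{\mrm{B}}[\ket{E_j}\bra{E_j}]$ and $\rsmc$ has trace norm $1$, so by Assumption~\ref{asmp:energy_distribution} both contributions are bounded by $\tr[P_{\mrm{out}}\rs(0)\otimes\rb(0)]<\epsilon_2$. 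Combining, dividing by $2$, and choosing $\epsilon_1,\epsilon_2$ small enough so that $\epsilon_1+\epsilon_2<\epsilon$ gives the desired $\mathcal{D}_{\mrm{S}}(\rde,\rmc)<\epsilon$ for sufficiently large $N$.

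I do not anticipate a genuine obstacle: the argument is essentially the convexity of trace distance (more precisely, joint convexity via the triangle inequality on $\tnorm{\cdot}$) combined with the partial trace being contractive, so that the only inputs needed are the strong ETH applied eigenstate-wise and the assumption that almost all initial weight lies in the energy shell. The only care required is bookkeeping of the two small parameters, namely the ETH threshold and the energy-shell leakage, which must be chosen jointly to ensure both hold for the same value of $N$; this is harmless because both assumptions are stated ``for any $\epsilon>0$, for sufficiently large $N$.''
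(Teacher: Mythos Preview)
Your proof is correct and follows essentially the same approach as the paper: split the diagonal ensemble into its in-shell and out-of-shell parts, control the former by the strong ETH (Assumption~\ref{dfn:Strong-ETH}) and the latter by Assumption~\ref{asmp:energy_distribution}, and combine via the triangle inequality. The paper packages the in-shell part as a renormalized intermediate state $\tilde{\rde}:=(1-P_{\mrm{out}})\rde(1-P_{\mrm{out}})/\tr[(1-P_{\mrm{out}})\rde]$ and invokes the triangle inequality $\ldist{\rde}{\rmc}\leq\ldist{\rde}{\tilde{\rde}}+\ldist{\tilde{\rde}}{\rmc}$, whereas you work directly with the eigenstate decomposition, but this is only a presentational difference.
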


\begin{proof}
	From the triangle inequality, we find
\begin{align}
	\ldist{\rde}{\rmc}
	\leq
	\ldist{\rde}{\tilde{\rde}}
	+
	\ldist{\tilde{\rde}}{\rmc},
\end{align}
where we defined the diagonal ensemble projected onto the energy shell by
$ \tilde{\rde}:=(1- P_{\mrm{out}})\rde(1- P_{\mrm{out}})/\tr[(1- P_{\mrm{out}})\rde] $.
From Assumption \ref{dfn:Strong-ETH},
for any $ \epsilon>0 $,
\begin{align}
\ldist{\tilde{\rde}}{\rmc}
<\frac{\epsilon}{2}
\end{align}
holds for sufficiently large $ N $.
On the other hand, from Assumption \ref{asmp:energy_distribution},
for any $ \epsilon>0 $,
\begin{align}
\ldist{\rde}{\tilde{\rde}}<\frac{\epsilon}{2}
\end{align}
holds for sufficiently large $ N $.
By summing up these two terms,
we obtain Eq.~\eqref{eq:theramaliztion}.
\end{proof}

This result means that the system reaches an equilibrium state independent of the initial state,
if Assumptions \ref{dfn:Strong-ETH} and \ref{asmp:energy_distribution} are satisfied. 
Then, by combining Lemma~\ref{lem:thermalization} with Theorem \ref{thm:lower_bound},
we obtain the non-negativity of the saturation value in the thermodynamic limit.

\begin{theorem}[\label{thm:non-negativity}Non-negativity from the strong ETH]
We assume that $ [\hs+\hb,\hi]=0 $,  along with Assumptions \ref{dfn:Strong-ETH} and \ref{asmp:energy_distribution}.
For any $ \epsilon>0 $,
\begin{align}
\epde\geq-\epsilon
\label{eq:non-negativity}
\end{align}
holds for sufficiently large $ N $.
\end{theorem}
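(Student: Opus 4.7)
The plan is to combine Theorem~\ref{thm:lower_bound} with Lemma~\ref{lem:thermalization}, using the hypothesis $[\hs+\hb,\hi]=0$ to kill the interaction-energy correction. First I would invoke Theorem~\ref{thm:lower_bound} to get $\epde \geq -\dep$, where $\dep$ is defined in Eq.~\eqref{eq:def_dep}. Because we have assumed $[\hs+\hb,\hi]=0$, the interaction energy $\tr[\hi\rho(t)]$ is conserved along the evolution, so $\tr[\hi(\rs(0)\otimes\rb(0)-\rde)]=0$ and likewise for the canonical initial state; therefore $\dei=0$ by its definition in Eq.~\eqref{eq:def_dei}. Hence $\dep$ reduces to
\begin{align}
\dep=\dmccan\log(D_\rmS-1)+H_2(\dmccan)+2\dmccan\beta\onorm{\hs},
\end{align}
which is a continuous function of $\dmccan=\mathcal{D}_{\rmS}(\rde,\rcande)$ vanishing as $\dmccan\to 0$.

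The next step is to show $\dmccan\to 0$ in the thermodynamic limit, which is where Lemma~\ref{lem:thermalization} enters. By the triangle inequality,
\begin{align}
\dmccan=\ldist{\rde}{\rcande}\leq \ldist{\rde}{\rmc}+\ldist{\rcande}{\rmc}.
\end{align}
The first term is controlled directly by Lemma~\ref{lem:thermalization} applied to the original initial state $\rs(0)\otimes\rb(0)$. For the second term, I would apply Lemma~\ref{lem:thermalization} a second time, this time to the reference initial state $\rs(0)\otimes\rbcan$ whose diagonal ensemble is $\rcande$. Assumption~\ref{dfn:Strong-ETH} is a property of the total Hamiltonian and is therefore available for this application as well; the nontrivial point is checking Assumption~\ref{asmp:energy_distribution}.

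The cleanest part of the argument is the final bookkeeping: given $\epsilon>0$, pick $\epsilon'$ small enough that $\epsilon'\log(D_\rmS-1)+H_2(\epsilon')+2\epsilon'\beta\onorm{\hs}<\epsilon$, then use Lemma~\ref{lem:thermalization} twice to force each of $\ldist{\rde}{\rmc}$ and $\ldist{\rcande}{\rmc}$ below $\epsilon'/2$ for all sufficiently large $N$, giving $\dmccan<\epsilon'$ and thus $\dep<\epsilon$, whence $\epde\geq -\epsilon$. The main obstacle is verifying that the canonical reference state $\rs(0)\otimes\rbcan$ also satisfies Assumption~\ref{asmp:energy_distribution}: the inverse temperature $\beta$ must be chosen so that the canonical ensemble is centered on the same energy density $u$ as $\rs(0)\otimes\rb(0)$ (this is precisely the condition $\tr_\rmB[\hb\rbcan]=\tr_\rmB[\hb\rb(0)]$ already imposed in Sec.~\ref{subsec:bound}, up to subleading contributions from $\hs$ and $\hi$). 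With this choice, standard estimates on the canonical energy fluctuations (of order $\mathcal{O}(N^{1/2})$) ensure that essentially all of the canonical weight lies inside the energy shell of width $\Delta=\mathcal{O}(N^\alpha)$ with $1/2<\alpha<1$, so Assumption~\ref{asmp:energy_distribution} holds for $\rs(0)\otimes\rbcan$ and Lemma~\ref{lem:thermalization} indeed applies to yield $\ldist{\rcande}{\rmc}\to 0$.
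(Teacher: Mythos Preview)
Your proof is correct and follows the same route as the paper, which simply asserts that $\dmccan$ vanishes by Lemma~\ref{lem:thermalization} and that $\dei$ vanishes by the commutator hypothesis. Your version is in fact more explicit than the paper's terse argument: you spell out the triangle inequality through $\rmc$ and the need to verify Assumption~\ref{asmp:energy_distribution} for the canonical reference state $\rs(0)\otimes\rbcan$, a step the paper leaves implicit (though its remark after Assumption~\ref{asmp:energy_distribution} about $\mathcal{O}(N^{1/2})$ canonical energy fluctuations is precisely the ingredient you invoke).
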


\begin{proof}
In Eq.~\eqref{eq:def_dep},
$ \dmccan $ vanishes from Lemma~\ref{lem:thermalization},
and $ \dei $ vanishes from $ [\hs+\hb,\hi]=0 $.
\end{proof}

We note that Theorem~\ref{thm:non-negativity} is valid as long as its assumptions are satisfied,
regardless of the initial state of the bath.
When commutator $ [\hs+\hb,\hi] $ is not equal to $ 0 $,
an negative error term $ -\beta\dei $ defined in Eq.~\eqref{eq:def_dei}
is just added  to the right-hand side of inequality \eqref{eq:non-negativity}.
However,  $ \beta\dei $ is expected to be small for some cases,
as will be numerically demonstrated in Sec.~\ref{sec:Numerical}.

We have not mentioned the size dependence of the error term in inequality~\eqref{eq:non-negativity},
which generally depends on the nontrivial scaling in 
inequalities \eqref{eq:strong ETH} and \eqref{eq:assumption1}.
We will numerically show in Sec.~\ref{sec:Numerical} that $ \epde $ becomes non-negative without error,
even with a bath of $ 16 $ sites.

By combining the saturation of the entropy production (Theorem \ref{thm:saturation}) with the non-negativity (Theorem \ref{thm:non-negativity}),
we find that 
\begin{align}
 \braket{\sigma(t)}\gtrsim 0
\end{align}
holds for a typical time $ t $.
This inequality represents the second law of thermodynamics at late times
for general initial states.
We note that our theory implies a stronger statement than just the non-negativity of the entropy production:
it \textit{saturates} at a non-negative value.

\subsection{\label{subsec:weak ETH}Non-negativity from the weak ETH}

Non-negativity of the saturated entropy production can also be shown by using the weak ETH~\cite{Biroli2010,Iyoda2016,Mori2016} instead of the strong ETH.
In fact, if the initial state has a sufficiently large effective dimension,
we can prove the initial-state independence of $ \rde_{\rmS} $.

In the present context,
we formulate the weak ETH as the following large-deviation type bound.

\begin{assumption}[\label{asmp:weak-ETH}Weak ETH]
	We say that the total system satisfies the weak ETH,
	if for any $ \epsilon>0 $, there exists a constant $ \gamma_{\epsilon}>0 $
	such that
	\begin{align}
	\prob_{j\in M_{uN,\Delta}}[\ldist{\ket{E_j}\bra{E_j}}{\rmc}]> \epsilon]
	<e^{-\gamma_\epsilon N},
	\end{align}
	where $ \prob_{j\in M_{uN,\Delta}}[\cdots] $ represents the probability in the uniform distribution on the set of the energy eigenstates in the energy shell.
\end{assumption}
This form of the weak ETH has rigorously been proved for general lattice systems
with translation invariance,
including integrable systems~\cite{Tasaki2015,Mori2016}.
We note that the proof in Ref.~\cite{Tasaki2015,Mori2016} has been presented with
the choice of the energy shell
$ M'_{U',\Delta'}=\{j:U'-\Delta'\leq E_j\leq U'\} $ with $ \Delta'=\mathcal{O}(N) $,
while it is straightforward to generalize the proof to the case of 
our choice of the energy shell $ M_{uN,\Delta} $.
Under the weak ETH, a sufficient condition for the initial state independence is that
the initial state has an exponentially large effective dimension:

\begin{assumption}[\label{asmp:large_eff}Large effective dimension]
	We assume that for any $ \epsilon, \tilde{\epsilon}>0 $,
	the effective dimension of the initial state $ \rs(0)\otimes\rb(0) $ satisfies
	\begin{align}
	\deff>\frac{\dbmc e^{-\gamma_{\epsilon} N}}{\tilde{\epsilon}}
	\end{align}
	for sufficiently large $ N $.
\end{assumption}

Then, we evaluate the local trace distance between the diagonal ensemble $ \omega $ and the microcanonical ensemble $ \rmc $.

\begin{lemma}[\label{lem:thermalization_weak}Initial state independence from the weak ETH]
	Under Assumptions \ref{asmp:energy_distribution}, \ref{asmp:weak-ETH}, and \ref{asmp:large_eff},
	for any $ \epsilon>0 $,
	\begin{align}
	\ldist{\rde}{\rmc}<\epsilon
	\end{align}
	holds for sufficiently large $ N $.
\end{lemma}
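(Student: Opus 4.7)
The plan is to mirror the proof of Lemma~\ref{lem:thermalization}, but to replace the uniform control over eigenstates (which is no longer available) by a split into ``typical'' (thermal) and ``atypical'' (non-thermal) eigenstates inside the shell. First I would use the triangle-inequality decomposition
\begin{align}
\ldist{\rde}{\rmc} \leq \ldist{\rde}{\tilde{\rde}} + \ldist{\tilde{\rde}}{\rmc},
\end{align}
with $\tilde{\rde} := (1 - P_{\mrm{out}})\rde(1 - P_{\mrm{out}})/\tr[(1 - P_{\mrm{out}})\rde]$ the diagonal ensemble projected onto the energy shell. Assumption~\ref{asmp:energy_distribution} controls the first term, exactly as in the strong-ETH case.

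For the second term, fix a small $\epsilon' > 0$ and partition $M_{uN,\Delta}$ into a good set $G := \{j : \ldist{\ket{E_j}\bra{E_j}}{\rmc} \leq \epsilon'\}$ and its complement $B$. Writing $\tilde{\rde} = \sum_j \tilde{p}_j \ket{E_j}\bra{E_j}$, convexity of the local trace distance together with $\ldist{\ket{E_j}\bra{E_j}}{\rmc} \leq 1$ for $j \in B$ yields
\begin{align}
\ldist{\tilde{\rde}}{\rmc} \leq \epsilon' + \sum_{j \in B} \tilde{p}_j,
\end{align}
so the task reduces to bounding the weight that $\tilde{\rde}$ places on the atypical eigenstates.

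The main step is to combine the weak ETH with the large-effective-dimension hypothesis via Cauchy--Schwarz. From the non-degeneracy representation~\eqref{eq:DE_non-deg}, the probabilities $p_j := \braket{E_j|\rs(0)\otimes\rb(0)|E_j}$ satisfy $\sum_j p_j^2 = 1/\deff$, so
\begin{align}
\sum_{j \in B} p_j \leq \sqrt{|B|} \cdot \sqrt{\textstyle\sum_j p_j^2} = \sqrt{|B|/\deff}.
\end{align}
Assumption~\ref{asmp:weak-ETH} supplies $|B| \leq \dbmc e^{-\gamma_{\epsilon'} N}$ (since the weak ETH is a fractional statement on the shell of cardinality $\dbmc$), and Assumption~\ref{asmp:large_eff}, invoked with this same $\epsilon'$ and a free parameter $\tilde{\epsilon}$, then gives $\sum_{j \in B} p_j < \sqrt{\tilde{\epsilon}}$. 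Dividing by the normalization $\tr[(1 - P_{\mrm{out}})\rde]$, which is $\geq 1/2$ for sufficiently large $N$ by Assumption~\ref{asmp:energy_distribution}, converts this into a bound on $\sum_{j \in B} \tilde{p}_j$. Choosing $\epsilon'$ small, then $\tilde{\epsilon}$ small, and taking $N$ large enough in Assumption~\ref{asmp:energy_distribution} closes the argument.

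The main obstacle is logical rather than computational: the three small parameters are interlocked, and one has to fix them in the right order. First $\epsilon'$ must be chosen (to make the good-set contribution $\leq \epsilon/4$, say), because this selects the exponent $\gamma_{\epsilon'}$ appearing in Assumption~\ref{asmp:large_eff}; only then can one pick $\tilde{\epsilon}$ to beat the remaining error from the bad set; and only then can Assumption~\ref{asmp:energy_distribution} be invoked, with its own $\epsilon$, to control both the projection $\ldist{\rde}{\tilde{\rde}}$ and the renormalization. A minor secondary issue is checking that the renormalization $\tilde{p}_j = p_j/\tr[(1 - P_{\mrm{out}})\rde]$ does not spoil the Cauchy--Schwarz bound, which is immediate once the denominator is bounded away from zero.
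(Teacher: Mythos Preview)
Your proposal is correct and follows essentially the same route as the paper: the triangle-inequality split through $\tilde{\rde}$, then a good/bad partition of the shell combined with Cauchy--Schwarz to obtain $\ldist{\tilde{\rde}}{\rmc} < \epsilon' + \sqrt{\dbmc e^{-\gamma_{\epsilon'}N}/\deff}$, which Assumption~\ref{asmp:large_eff} makes small. The paper's proof is terser (it simply states the Schwarz-inequality bound without spelling out the good/bad split or the parameter ordering), but your more explicit treatment of the renormalization and of the order in which $\epsilon'$, $\tilde{\epsilon}$, and $N$ must be chosen is a genuine clarification rather than a deviation.
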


\begin{proof}
From Assumption 2, $ \ldist{\rde}{\tilde{\rde}} $ vanishes in large $ N $.
	By using the Schwarz inequality and Assumption \ref{asmp:weak-ETH},
	we have for any $ \epsilon>0 $, 
	\begin{align}
	\ldist{\tilde{\rde}}{\rmc}
	<\epsilon+\sqrt{\frac{\dbmc e^{-\gamma_{\epsilon} N}}{\deff}}.
	\end{align}
	Therefore, by using Assumption~\ref{asmp:large_eff},
	we prove Lemma~\ref{lem:thermalization_weak}.
\end{proof}

Lemma~\ref{lem:thermalization_weak} means that the diagonal ensemble is locally thermal if the initial state has  the exponentially large effective dimension.
Then we can prove the non-negativity of the saturation value in the same manner as is the case for the strong ETH.

\begin{theorem}[\label{thm:non-negativity_weak}Non-negativity from the weak ETH]
	We assume that $ [\hs+\hb,\hi]=0 $, along with Assumptions \ref{asmp:energy_distribution}, \ref{asmp:weak-ETH}, and \ref{asmp:large_eff}.
	For any $ \epsilon>0 $,
	\begin{align}
	\epde\geq-\epsilon
	\end{align}
	holds for sufficiently large $N $.
\end{theorem}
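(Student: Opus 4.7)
The plan is to mimic the proof of Theorem~\ref{thm:non-negativity}, substituting Lemma~\ref{lem:thermalization_weak} for Lemma~\ref{lem:thermalization}. The scaffolding is Theorem~\ref{thm:lower_bound}, which yields $\epde\geq-\dep$ for any product initial state, so the whole task reduces to showing that $\dep$ can be made arbitrarily small in the large-$N$ limit under the given hypotheses.

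The $\dei$ term in $\dep$ is eliminated immediately by the conservation law $[\hs+\hb,\hi]=0$: this gives $[H,\hi]=0$, so $\tr[\hi\rho(t)]$ is time-independent under the full unitary evolution and in particular equals $\tr[\hi\rde]$; the identical identity holds for the auxiliary canonical dynamics generated by $H$ on $\rs(0)\otimes\rbcan$. Hence $\dei=0$, and only the $\dmccan$-dependent terms of $\dep$ remain. Since these are continuous in $\dmccan$ and vanish at $\dmccan=0$, it suffices to prove $\dmccan\to 0$ for large $N$.

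I bound $\dmccan=\ldist{\rde}{\rcande}$ by inserting the microcanonical ensemble $\rmc$ on the energy shell $M_{uN,\Delta}$ and applying the triangle inequality,
\begin{align}
\ldist{\rde}{\rcande}\leq\ldist{\rde}{\rmc}+\ldist{\rmc}{\rcande}.
\end{align}
The first summand is controlled by Lemma~\ref{lem:thermalization_weak} applied directly to the initial state $\rs(0)\otimes\rb(0)$. For the second summand, I regard $\rcande$ as the diagonal ensemble of $\rs(0)\otimes\rbcan$ and invoke Lemma~\ref{lem:thermalization_weak} a second time. This requires verifying Assumptions~\ref{asmp:energy_distribution} and~\ref{asmp:large_eff} for this canonical initial state: the first follows from the $\mathcal{O}(\sqrt{N})$-scale concentration of canonical energy fluctuations provided $\Delta=\mathcal{O}(N^\alpha)$ with $\alpha>1/2$, while the second follows from inequality~\eqref{eq:bound_eff} together with the exponentially small purity of $\rbcan$ established in Appendix~\ref{sec:effective_dimension}.

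The main obstacle is the quantitative book-keeping in this last verification: Assumption~\ref{asmp:large_eff} demands $\deff>\dbmc e^{-\gamma_\epsilon N}/\tilde{\epsilon}$, so the exponential decay rate of $\tr[\rbcan^2]$ must at least match the net growth rate of $\dbmc e^{-\gamma_\epsilon N}$. Since $\rbcan$ is tuned to the same mean bath energy as $\rb(0)$, both $\dbmc$ and $1/\tr[\rbcan^2]$ scale as $e^{s(u)N}$ up to subexponential corrections, and the surplus factor $e^{\gamma_\epsilon N}/\tilde{\epsilon}$ covers the required inequality comfortably for large $N$. Collecting the bounds then gives $\epde\geq-\epsilon$ for sufficiently large $N$, as asserted.
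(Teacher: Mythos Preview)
Your overall strategy matches the paper's: the paper simply asserts that Theorem~\ref{thm:non-negativity_weak} is proved ``in the same manner as is the case for the strong ETH,'' meaning Theorem~\ref{thm:lower_bound} combined with Lemma~\ref{lem:thermalization_weak} in place of Lemma~\ref{lem:thermalization}. Your triangle-inequality splitting of $\dmccan$ through $\rmc$, and the elimination of $\dei$ via $[\hs+\hb,\hi]=0$, are exactly what this entails, and you have in fact spelled out more than the paper does.

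There is, however, a gap in your last quantitative step, where you verify Assumption~\ref{asmp:large_eff} for the canonical reference state $\rs(0)\otimes\rbcan$. You claim that $1/\tr[(\rbcan)^2]$ scales as $e^{s(u)N}$, but $-\log\tr[(\rbcan)^2]$ is the R\'enyi-2 entropy of $\rbcan$, which is generically \emph{strictly smaller} than its von Neumann entropy $\sim N s(u)$. From Appendix~\ref{sec:effective_dimension} the rate is $2\beta\bigl(f(2\beta)-f(\beta)\bigr)$, and a short Legendre-transform argument shows this equals $s(u_\beta)+\bigl[(s(u_\beta)-2\beta u_\beta)-\phi(2\beta)\bigr]\leq s(u_\beta)$, with strict inequality whenever $u_\beta\neq u_{2\beta}$ (i.e., whenever the heat capacity is finite). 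Thus inequality~\eqref{eq:bound_eff} only yields $\deff\gtrsim e^{s_2 N}$ with $s_2<s(u)$, and Assumption~\ref{asmp:large_eff} for the canonical state would then require $\gamma_\epsilon>s(u)-s_2$, which the weak ETH alone does not guarantee. The paper's one-line proof does not confront this issue either, so the gap is shared; closing it rigorously would need either a sharper lower bound on $\deff$ for the canonical initial state than the purity bound, or an explicit additional hypothesis on the size of $\gamma_\epsilon$.
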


We note that the validity of Assumption~\ref{asmp:large_eff} is highly nontrivial in practice.
On the other hand, an advantage of the argument by the weak ETH is that
it is applicable to both integrable and non-integrable systems.

\section{\label{sec:Numerical}Numerical simulation}
To confirm the validity of our theory,
we performed a numerical simulation by using exact diagonalization.
As a non-integrable model,
we consider a one-dimensional hard-core bosons with nearest-
and next-nearest-neighbor hoppings and interactions,
which satisfies the strong ETH~\cite{Kim2014}.
System $  \mrm{S} $ is a single site
and bath $ \mrm{B} $ consists of $ L $ sites.
The annihilation (creation) operator of a hard-core boson at site $ i $ is written as $ c_i $ ($ c_i^\dagger $),
which satisfies the commutation relations $ [c_i,c_j]=[c_i,c_j^\dagger]=[c_i^\dagger,c_j^\dagger]=0$ for $ i\neq j $,
$ \{c_i,c_i\}=\{c^\dagger_i,c_i^\dagger\}=0$, and $ \{c_i,c_i^\dagger\}=1 $.
The Hamiltonians are then given by
\begin{align}
\hs&=\omega c_0^\dagger c_0,\\
\hi&=-\gamma_{\mrm{I}}(c_0^\dagger c_1+c_1^\dagger c_0),\\
\hb&=\omega \sum_{i=1}^Lc^\dagger_ic_i\nonumber\\
&\quad+\sum_{i=1}^{L-1}[-\gamma(c^\dagger_ic_{i+1}+c_{i+1}^\dagger c_i )+gc^\dagger_ic_ic_{i+1}^\dagger c_{i+1}]
\nonumber\\
&\quad+\sum_{i=1}^{L-2}[-\gamma'(c^\dagger_ic_{i+2}+c_{i+2}^\dagger c_i )+g'c^\dagger_ic_ic_{i+2}^\dagger c_{i+2}]	,
\end{align}
where $ \omega>0 $ is the on-site potential, 
$ -\gamma_{\mrm{I}} $ is the hopping rate between system $ \mrm{S} $ and bath $ \mrm{B} $,
$ -\gamma $ and $ -\gamma' $ are the hopping rates in bath $ \mrm{B} $,
and $ g>0 $ and $ g'>0 $ are the repulsion energies.
We assume the open boundary condition.
We note that $ [\hs+\hb,\hi]\neq0 $ in this model.
We set the initial number of bosons in bath $ \rmB $ by $ N_{\mathrm{b}} $.
The initial state is a product state of system $ \rmS $ and bath $ \rmB $,
where the state of system $ \mrm{S} $ is given as $ \ket{1}:=c_0^\dagger\ket{0} $, and
the state of bath $ \rmB $ is an energy eigenstate $ \ket{E^{\rmB}_j} $.

Figure \ref{fig:effective_dimension} shows $ \deff $ of the initial state $ \ket{1}\ket{E^{\rmB}_j} $,
which implies that $ \deff $ increases with $ L $ at all energy scales.
We note that an energy eigenstate $  \ket{E^{\rmB}_j}$ is not  necessarily  typical in the Hilbert space of the energy shell,
and the amount of $ \deff $ is nontrivial.
The numerical result in Fig.~\ref{fig:effective_dimension} suggests that fluctuations of observables around the diagonal ensemble averages are negligible,
and the entropy production will indeed saturate.

\begin{figure}
\includegraphics[width=0.9\linewidth]{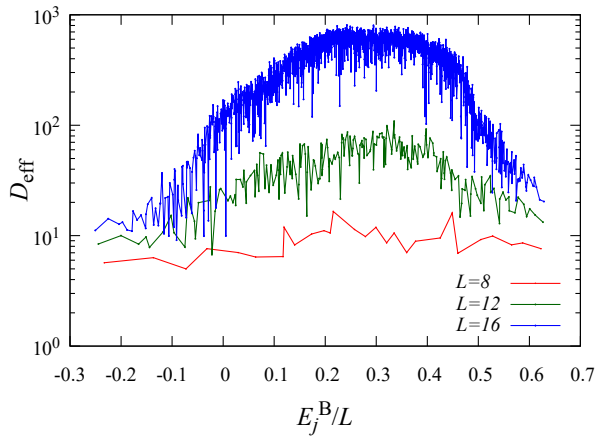}
\caption{\label{fig:effective_dimension}
	(color online)
Initial state dependence of the effective dimensions $ \deff $ for the one-dimensional non-integrable model of hard-core bosons with the bath size $ L=8,12,16 $.
We consider the case that bath $ \mrm{B} $ is $ 1/4 $ filling ($ N_{\mathrm{b}}/L =1/4$) and the initial state is a pure state $ \ket{1}\ket{E^{\rmB}_j} $.
The parameters are given by $ \epsilon/\gamma=1, \gamma_{\mrm{I}}/\gamma=1, g/\gamma=0.1, \gamma'/\gamma=0.2, g'/\gamma=0.1 $.
The effective dimension increases with $ L $.}
\end{figure}

Figure \ref{fig:entropy_production} shows the time dependence of the entropy production.
We set the inverse temperature as $ \beta=0.1 $,
and choose an energy eigenstate $ \ket{E^{\rmB}_j} $
such that its inverse temperature is closest to $ 0.1 $ in the energy shell.
Here, the inverse temperature $ \beta $ of  $ \ket{E^{\rmB}_j} $ is given by the solution of $ \tr_{\mrm{B}}[\hb\ket{E^{\rmB}_j} \bra{E^{\rmB}_j} ] =\tr_{\mrm{B}}[\hb\rbcan(\beta)] $.
We set parameters as $ L=16, \epsilon/\gamma=1,\gamma_{\mrm{I}}/\gamma=1 ,g/\gamma=0.1, \gamma'/\gamma=0.2, g'/\gamma=0.1 $,
and $ \gamma_{\mrm{I}}/\gamma=0.5,1$, or $ 2 $.

As shown in Fig.~\ref{fig:entropy_production},
the entropy production gradually increases up to $ \gamma t\simeq 10^0 $,
oscillates in the medium regime $10^0 \lesssim \gamma t\lesssim 10^2$,
and then saturates in the long time regime $\gamma t\gtrsim 10^2 $.
This behavior is consistent with Theorem \ref{thm:saturation}.
The saturation value in the long time regime is positive for all the cases of $ \gamma_{\mrm{I}}/\gamma=0.5,1,2 $.
This implies that the error term in Theorem \ref{thm:non-negativity} is negligible
in this example.
The error term is also small in a two dimensional hard-core bosons with nearest-neighbor repulsion~\cite{Iyoda2016}.

We note that the saturation value is independent of the interaction rate,
which implies that the interaction strength affects the entropy production only in the short time regime.
The entropy production is also non-negative in the short time regime,
which is out of the scope of the present theory,
but can be explained by our previous theorem 
based on the Lieb-Robinson bound~\cite{Iyoda2016}.

\begin{figure}
\includegraphics[width=0.9\linewidth]{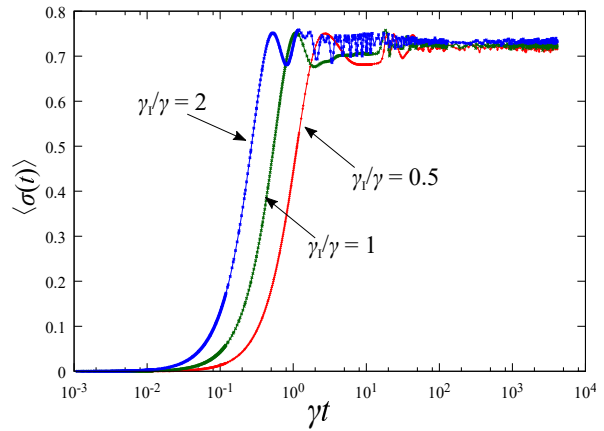}
\caption{\label{fig:entropy_production}
	(color online)
Time dependence of the entropy production with $ L=16 $ and $ N_{\mathrm{b}}=5 $.
The entropy production is plotted against dimensionless time $ \gamma t $,
for three interaction strength between system $ \mrm{S} $ and bath $ \mrm{B} $: $   \gamma_{\mrm{I}}/\gamma=0.5,1,2$.
We set other parameters as $ \epsilon/\gamma=1, g/\gamma=0.1, \gamma'/\gamma=0.2, g'/\gamma=0.1 $.
The initial sate is the product state $ \ket{1}\ket{E^{\rmB}_j} $ and the inverse temperature of the initial energy eigenstate is $ \beta=0.1 $.
The entropy production is non-negative in all time scale,
and saturates in the long time $ \gamma t\gtrsim10^2 $.
We note that the saturation values are almost the same for all the interaction strength $ \gamma_{\mrm{I}} $.}
\end{figure}

\section{\label{sec:Conclusion}Conclusion}
In this paper, we have discussed the second law of thermodynamics for general quantum states in the long time regime.
We have proved Theorem~\ref{thm:saturation},
which states that the entropy production saturates for most times
[inequality~\eqref{eq:saturation}].
The saturation value \eqref{eq:saturation_value} is given by the entropy production with respect to the diagonal ensemble $ \omega $.
We have then derived the lower bound \eqref{eq:lower_bound} of the saturated entropy production.
If the diagonal ensemble is independent of the parameters of the initial state except for the average energy,
the obtained bound \eqref{eq:lower_bound} implies the non-negativity of the saturated entropy production.
This scenario can be justified by assuming the ETH.
We have shown the initial state independence (Lemma~\ref{lem:thermalization})
and the non-negativity of the saturation value (Theorem~\ref{thm:non-negativity}) under the strong ETH (Assumption~\ref{dfn:Strong-ETH}) and an additional assumption on the initial energy (Assumption~\ref{asmp:energy_distribution}).
We have also shown the non-negativity from the weak ETH (Theorem~\ref{thm:non-negativity_weak}).

Our results have clarified that thermodynamic irreversibility can emerge from unitary dynamics of quantum many-body systems in the long time regime.
We emphasize that, if the initial state of the bath is a pure state,
the fluctuation theorem holds only in the short time regime,
but breaks down in the long time regime~\cite{Iyoda2016}.
This is because non-thermal quantum fluctuations of the initial state affect higher-order fluctuations of the entropy production in the long time regime.
In contrast,
as shown in the present work,
the second law is robust
in the long time regime,
because the second law only concerns the first cumulant
(i.e., the average) of the entropy production.
More detailed investigation of the time scale of saturation of the entropy production is a future issue.

\begin{acknowledgments}
We are grateful to Jordan M. Horowitz for his careful reading of the manuscript.
The authors also thank Takashi Mori for valuable discussions.
K.K.  are supported by JSPS KAKENHI Grant Number JP17J06875.
E.I. and T.S. are supported by JSPS KAKENHI Grant Number JP16H02211.
E.I. is also supported by JSPS KAKENHI Grant Number JP15K20944.
T.S. is also supported by JSPS KAKENHI Grant Number JP25103003.
\end{acknowledgments}

\appendix

\section{\label{sec:effective_dimension}Lower bounds of the effective dimension}

We show lower bounds of effective dimensions $ \deff $
for the three cases that
the initial state of bath $ \rmB $ is the microcanonical ensemble $ \rbmc $,
the canonical ensemble $ \rbcan $, 
and a typical pure state in the Hilbert space of the energy shell.
In these cases,
the effective dimensions are exponentially large with respect to the size of bath $ \rmB $.

\subsection{Microcanonical ensemble}
The purity of the microcanonical ensemble $ \rbmc $ is given by
\begin{align}
\tr[(\rbmc)^2] 
&=\frac{1}{\dbmc}
=e^{-N_{\rmB} s(u)},
\end{align}
where $ \dbmc $ is the dimension of the energy shell,
$ N_\rmB $ is the size of bath $ \rmB $,
and $ s(u) $ is the entropy density as a function of energy density $ u $.
From inequality \eqref{eq:bound_eff}, we have the lower bound
\begin{align}
\deff\geq \frac{e^{N_{\rmB} s(u)}}{\nde}.
\label{eq:bound_mc}
\end{align}
Since $ s(u) $ approaches an $ N_\rmB $-independent function in the large bath limit~\cite{Ruelle},
we find that the right-hand side of inequality \eqref{eq:bound_mc} becomes exponentially large in $ N_\rmB $.

\subsection{Canonical ensemble}
The purity of the canonical ensemble $ \rbcan $ is given by
\begin{align}
\tr[(\rbcan)^2] 
&=\frac{\tr[e^{-2\beta\hb}]}{(\tr[e^{-\beta\hb}])^2}
\\
&=\exp\left[-2\beta N_\rmB(f(2\beta)-f(\beta))\right],
\end{align}
where we defined the free energy density by $  f(\beta):=-(\log\tr[e^{-\beta\hb}] ) /\beta N_\rmB$.
From inequality \eqref{eq:bound_eff}, we have the lower bound
\begin{align}
\deff\geq \frac{\exp\left[2\beta N_\rmB(f(2\beta)-f(\beta))\right]}{\nde}.
\label{eq:bound_can}
\end{align}
Since $ f(\beta) $ approaches an $ N_\rmB $-independent function in the large bath limit,
we find that the right-hand side of inequality \eqref{eq:bound_can} becomes exponentially large in $ N_\rmB $.

\subsection{Typical pure state}
We randomly sample a typical pure state $ \ket{\psi} $ from the Hilbert space of the energy shell of bath $ \rmB $ with respect to the Haar measure.
From Theorem 2 of Ref.~\cite{Linden2009},
the probability that the effective dimension is smaller than $ \dbmc/4 $ is bounded from above:
\begin{align}
\prob_{\psi}
\left[\deff<
\frac{\dbmc}{4}\right]
\leq
2\exp(-c\sqrt{\dbmc}).
\end{align}
This inequality implies that the effective dimension of a typical pure state
becomes exponentially large with increasing $ N_\rmB $.

%

\end{document}